\documentclass[10pt]{ijnam}
\usepackage{graphicx,subfigure,array}
\usepackage{hyperref,hypernat,mathrsfs}

\hsize=5.5 true in
\textheight=8.4 true in
\topmargin 1in


\copyrightinfo{2009}{} 


\numberwithin{equation}{section}

\newtheorem{theorem}{Theorem}[section]
\newtheorem{lemma}[theorem]{Lemma}
\newtheorem{proposition}[theorem]{Proposition}
\newtheorem{remark}[theorem]{Remark}

\newtheorem{definition}[theorem]{Definition}
\newtheorem{example}[theorem]{Example}
\newtheorem{assumption}[theorem]{Assumption}

\newcommand{\secref}[1]{\S\ref{#1}}
\newcommand{\lemref}[1]{Lemma~\ref{#1}}
\newcommand{\propref}[1]{Proposition~\ref{#1}}

\newcommand{\assref}[1]{Assumption~\ref{#1}}

\newcommand{\exref}[1]{Example~\ref{#1}}
\newcommand{\figref}[1]{Figure~\ref{#1}}

\newcommand{\tabref}[1]{Table~\ref{#1}}

\newcommand{\dd}{\textrm{d}}
\newcommand{\ii}{\textrm{i}}
\newcommand{\nn}{\nonumber}

\newcommand{\be}{\begin{eqnarray}}
\newcommand{\ee}{\end{eqnarray}}
\newcommand{\bal}{\begin{aligned}}
\newcommand{\eal}{\end{aligned}}
\newcommand{\bes}{\begin{eqnarray*}}
\newcommand{\ees}{\end{eqnarray*}}
\newcommand{\bs}{\begin{subeqnarray}}
\newcommand{\es}{\end{subeqnarray}}
\newcommand{\bss}{\begin{subeqnarray*}}
\newcommand{\ess}{\end{subeqnarray*}}

\newcounter{saveeqn}

\def\hat{\widehat}
\def\supp{\operatorname{supp}}

\def\hu{\widehat u}
\def\hv{\widehat v}

\def\BbbR{\mathbb R}

\def\p{\partial}

\def\V{\mathcal V}
\def\Z{\mathcal Z}
\def\om{\omega}

\def\al{\alpha}

\def\ga{\gamma}
\def\Ga{\Gamma}
\def\G{\Gamma}

\def\Re{\operatorname{Re}}
\def\Im{\operatorname{Im}}
\def\and{\quad\text{and}\quad}

\def\hu{\widehat u}

\def\si{\sigma}
\def\rd{\partial}
\def\esssup{\operatorname{ess.\, sup}}
\def\<{\left\langle}
\def\>{\right\rangle}

\newcolumntype{x}[1]{>{\centering\hspace{0pt}}p{#1}} 

\begin{document}

\title[Laplace transformation method for the Black-Scholes equation]{Laplace transformation method for the Black-Scholes equation}

\author[H. Lee]{Hyoseop Lee}
\address{Interdisciplinary Program in Computational Science \& Technology,
         Seoul National University,
         Seoul 151--747,
         Korea
}
\email{hyoseop2@snu.ac.kr}
\urladdr{http://www.nasc.snu.ac.kr/hslee/}

\author[D. Sheen]{Dongwoo Sheen}
\address{
Department of Mathematics, and
Interdisciplinary Program in Computational Science \& Technology,
         Seoul National University,
         Seoul 151--747,
         Korea
}
\email{sheen@snu.ac.kr}
\urladdr{http://www.nasc.snu.ac.kr/sheen/}


\commby{Yanping Lin}


\thanks{The research of HL was partially supported by  KRF-2007-C00001
and that of DS by
KRF-2007-C00031. To appear in  \href{http://www.math.ualberta.ca/ijnam/}{\underline{International Journal of Numerical Analysis \& Modeling.}}
}

\subjclass[2000]{91B02, 44A10, 35K50}

\abstract{In this paper we apply the innovative Laplace transformation method
introduced by Sheen, Sloan, and Thom\'ee (IMA J. Numer. Anal., 2003) to solve
the Black-Scholes equation. The algorithm is of arbitrary high
convergence rate and  naturally parallelizable.
It is shown that the method is very efficient  for 
calculating various option prices.
Existence and uniqueness properties of the Laplace transformed Black-Scholes equation
are analyzed.
Also a transparent boundary condition associated with the Laplace
transformation method is proposed.
Several numerical results  for various options under various situations
confirm the efficiency, convergence and parallelization property of the proposed scheme.
}

\keywords{Black-Scholes equation, basket option, Laplace inversion, 
parallel method, transparent boundary condition}

\maketitle
\section{Introduction}
As stock markets have become more sophisticated, so have
their products. The simple buy/sell trades of the early markets have been replaced
by more complex financial options and derivatives.
These contracts can give investors various opportunities 
to tailor their dealings to their
investment needs.

One of the main concerns about financial options is what the exact values of
options are. For the simplest model in the case of constant coefficients,
an exact pricing formula was
derived by Black and Scholes, known as the Black-Scholes formula.
However, in the general case of time and space dependent coefficients 
the exact pricing formula are not yet established, and thus numerical
solutions have been used.

In order to describe an option price, let $x, K, t$ and $T$ denote the underlying asset price,
the strike price, the time to maturity, and the expiry date of an option,
respectively. As usual, $\si$ and $r$ represent the volatility of the underlying asset
and the risk-free interest rate of the market, respectively.
In this paper, we assume that $\si$ and $r$ depend on $x$ only.
Then a European option price $u(x,t)$ satisfies
the Black-Scholes equation:
\be
\frac{\rd u}{\rd t} -\frac12{\si^2x^2}\frac{\rd^2u}{\rd x^2}
-rx\frac{\rd u}{\rd x} + ru = 0, \quad (x,t)\in(0,\infty)\times(0,T],
\label{eq:bs-one}
\ee
where an initial condition $u_0(x)=u(x,0)$
is given by the initial contract of an option.
The basket option based on $n$ assets $\mathbf{x}=(x_1, \dots, x_n)$ satisfies
\be \label{eq:bs-basket}
&&  \frac{\rd u}{\rd t}
  -\frac12 \sum^{n}_{i,j=1}a_{ij}x_ix_j\frac{\rd^2 u}{\rd x_i \rd x_j}
  - \sum^n_{i=1}rx_i\frac{\rd u}{\rd x_i} + ru=0, \\
&&\qquad \qquad \qquad (\mathbf{x},t)\in(0,\infty)^n\times(0,T], \nonumber
\ee
where $a_{ij}=\sum^{n}_{k=1}\si_{ik}\si_{jk}$, with $\si_{ij}$ representing the corelation between the assets $x_i$ and $x_j$.

Several numerical methods have been used for solving the Black-Scholes equation,
for example in \cite{fin-fdm, tool-fin} and \cite{monte-fin} and the
references therein, one can find popular numerical schemes for option
pricing.
Usually the time marching methods such as forward Euler,
backward Euler and Crank-Nicolson schemes are used with a suitable
spatial discretization scheme.
In spite of the popularity of these time marching methods,
a critical drawback of these schemes is that they usually require as many
time steps as spatial meshes to balance the errors arising from discretization.
In particular, for the estimation of basket options of reasonable size,
the usual time marching schemes seem to be too slow in practice
since the cost of solving an elliptic system to advance to a next time step
is usually expensive. It is thus highly desirable to solve as small a number of
elliptic
solution steps as possible as well as to apply a very fast elliptic solver.

In this paper, we will focus on minimizing the number of elliptic solution steps
by proposing the 
Laplace transformation method for the Black-Scholes equation, which is also
naturally parallelizable.
It will be shown that our method
can dramatically reduce the computing time compared to the time marching schemes.
Suitable contours should be chosen in order to have very fast
convergence, and for this, we will estimate the resolvent of the Black-Scholes
equation. Also, an exact transparent boundary condition will be given at which
the infinite spatial domain is truncated.

There have been some related works in which the Laplace transformation method
has been used, for instance in \cite{asian-lap-inv,
  american-lap, double-lap}.
However, in these earlier papers the Laplace transformation method has been used
to obtain the analytic solution of various options
rather than to develop an efficient numerical scheme.
In particular, in \cite{american-lap} the partial Laplace transformation is applied
for American option pricing, and in \cite{mellin-cruz-baez, mellin-fin} the Mellin transformation
which is similar to the Laplace transformation is used to evaluate the analytic solution
of an option. Related with Laplace transformation methods 
there are other approaches based
on the so-called {$\mathcal H$}-matrix approach; for instance, see
\cite{gavril02,  gav-hackbusch-khoromskij-05}, and so on.
Also, high-dimensional parabolic problems can be 
solved using sparse grids \cite{MGriebel_1994a, leevntvaar-oosterlee-06,
  oosterlee-sparsegrid-jcam-08, reisinger-wittum-07}. Application
of our Laplace transformation method using sparse grids to option pricing 
will also be interesting. Other approaches in the fast time-stepping methods
can be found in \cite{schwab-fast-04, schwab-04, matache-schwab-wihler-05}.

In the following section, we will briefly describe the Laplace transformation
method proposed by Sheen, Sloan, and Thom\'ee in \cite{sst-para2}
with its numerical procedure and convergence. Then in 
\secref{sec:lap-bs} we will examine
the properties of the Laplace transformed Black-Scholes equation
including the solvability of the transformed equation,
transparent boundary condition and the resolvent.
Finally in \secref{sec:num}  we will present several numerical results for various options
and various situations with the parallelization property of the proposed scheme.

\section{The Laplace transformation and its inversion}\label{sec:laplace}
We begin with the abstract setting of a parabolic type equation so that
the proposed scheme can be applicable to various problems. Consider
\be\label{eq:parabolic}
\frac{\rd u}{\rd t}  + Au = f, \quad t\in(0.T]; \quad
u(0)=u_0,
\ee
where $u_0$ is a given initial function and $A$ a spatial elliptic operator
with its eigenvalues being located in the right half plane. (We added the
source term $f(x,t)$, which is not present in \eqref{eq:bs-one} or in
\eqref{eq:bs-basket}, in order to describe our method in more general setting.)
For each $z$ in the complex plane,
recall that the standard Laplace transform in time of a
function $u(\cdot,t)$ is given by
\be
\hu(\cdot,z):=\mathcal{L}[u](z)=\int_0^{\infty}u(\cdot,
t)e^{-zt}\,dt.\nn
\ee
Then the Laplace transformation of \eqref{eq:parabolic} is thus given in the form
\be\label{eq:lap-parabolic}
z\hat u + A\hat u &=& u_0 + \hat f(\cdot,z), \quad z\in \Gamma,
\ee
from which the solution $\hat u(z)= \hat u(\cdot,z)$ is formally given
by
\be\label{eq:lap-sol}
\hat u(\cdot,z) = (zI+A)^{-1} (u_0(\cdot)+\hat f(\cdot,z)),
\ee
for each $z.$ We suppose that the real parts of singular points of $\hat f(z)$
are less than some positive number.

The {\it Laplace inversion formula} (\cite{bromwich}) is  given by
\be\label{eq:inv-lap}
u(\cdot,t)=\frac{1}{2\pi \ii}\int_{\Ga} \hu(\cdot,z)e^{zt} \,dz,
\ee
where the integral contour $\Ga$ is a straight
line parallel to the imaginary axis expressed as
\be\label{eq:Gamma}
\qquad\Ga := \{z\in\mathbb{C}:z(\om)=\alpha+\ii\om \,\textrm{where}\,\, \om\in\BbbR\,\,\textrm{increases from}\, -\infty \,\textrm{to}\, +\infty\}.
\ee
The constant $\alpha\in \BbbR$ in the contour is called the Laplace convergence
abscissa, and the value of $\alpha$ is required to be greater than the real
part of any singularity of $\hu(z)$.

Inserting the explicit form of $z\in\Gamma$ given by \eqref{eq:Gamma} into Equation \eqref{eq:inv-lap}, one has
\begin{align}\label{eq:inv-lap-ch}
u(x,t)=\frac{e^{\alpha t}}{\pi}\int_{0}^{\infty}
\big[ \textrm{Re}\{\hu(x,\al+\ii\om)\}\cos(\om t) - \textrm{Im}\{\hu(x,\al+\ii\om)\}\sin(\om t)\big]\,d\om.
\end{align}
Denoting by $\sum'$ the summation with its first and the last summands being halved,
an application of the composite trapezoidal rule to this integral leads to the direct method
\be
u(x,t)\approx\frac{e^{\al t}}{T}
{{\sum}'}_{k=1}^{N-1}\left[
\textrm{Re}\{\hu(x,\al+\frac{k\pi \ii}{T})\}\cos(\frac{k\pi t}{T})
- \textrm{Im}\{\hu(x,\al+\frac{k\pi \ii}{T})\}\sin(\frac{k\pi t}{T})\right], \nn
\ee
for some sufficiently large $N$ with the length of two mesh points $\pi/T$.
Although this scheme can be easily implemented,
its convergence rate is slow due to the truncation and discretization errors.
In order to approximate the integration \eqref{eq:inv-lap-ch} fast and accurately,
there have been numerous modifications, such as \cite{cohen-laplace, crump,
  gavril02, gav-hackbusch-khoromskij-05,
  tal2,
  palencia04, sst-para, sst-para2,
jleesheen-laplace, jleesheen-back, mclean-thomee, mclean-sloan-thomee, tal, thomee-ijnam, weeks, weideman, widder}
and the references therein.
In this paper, we will use the deformation of the contour introduced in \cite{sst-para2},
which gives an arbitrary high-order convergence rate with a hyperbolic type deformation.

\subsection{Deformation of contour}

For a concrete mathematical analysis, we assume that
the spectrum $\sigma(A)$ of $A$ lies in a sector $\Sigma_\delta$ such that
\be
\si(A)\subset\Sigma_{\delta}=\{z\in\mathbb{C}: |\arg z| \le \delta,\,
z\ne 0,\,\delta\in(0,\frac{\pi}{2})\}, \nn
\ee
and the resolvent $(zI+A)^{-1}$ of $A$ satisfies
\be
\|(zI+A)^{-1}\|\le \frac{M}{1+|z|}, \quad \textrm{for }z\in\Sigma_{\delta}\cup B, \nn
\ee
where $B$ is a small circle at the origin.

The first restriction is required to avoid the singular points of
the integrand in \eqref{eq:inv-lap}.
Since the problem \eqref{eq:parabolic} has a solution of the form
\be\label{eq:inv-lap2}
u(t)\left(=u(\cdot,t)\right) = \frac1{2\pi\ii}
\int_\Ga \big(zI+A\big)^{-1}\big(u_0+\widehat{f}(z)\big)e^{zt}
\,dz\label{eq:para-inv},
\ee
{\it the integral contour has to be kept away from the spectrum
of $-A$ and the singular points of $\hat f(z),$}
 when we deform the contour.
In particular, since all eigenvalues
of $-A$ and the singularities of $\hat f(z)$ have real parts bounded by a
positive number, this restriction is natural.

Observe that if $z\in \Ga$ has negative real parts as
$|z|$ becomes large, the discretization error in numerically evaluating the
integrand in \eqref{eq:inv-lap2} will be reduced for positive $t$; thus it will be
desirable to deform the contour to the left half plane as long as all the
singularities are to the left of it. Based on this,
Sheen \textit{et al.} \cite{sst-para2} proposed the smooth contour
of hyperbola type as follows:
\be
\Ga =\{z\in\mathbb{C}:z(\om)=\zeta(\om)+\ii s\om, \quad\om\in\mathbb{R},
\quad \om\,\,\textrm{increasing}\}, \nn
\ee
where $\zeta(\om)=\ga-\sqrt{\om^2+\nu^2}$. In this case, since the contour
cuts the real line at $\gamma-\nu$, $\gamma$ and $\nu$ must be
selected such that $\gamma-\nu$ is larger than the negative of
the smallest eigenvalue of $A$ and the real parts of singularities of $\hat f(z).$
Also $s$ should be chosen such that all the singularities of $\hat u(\cdot,z)$
be to the left of the contour $\G.$

Using the above deformed contour, the inversion formula can be written as an infinite integral
with respect to a real variable,
\begin{align}
u(\cdot,t) = \frac{1}{2\pi\ii}\int^{\infty}_{-\infty}\hu(\cdot,\zeta(\om)+\ii s\om)(\zeta^{\prime}(\om)+\ii s)e^{(\zeta(\om)+\ii s\om)t}\dd\om. \nn
\end{align}
The infinite range of the above integration can be changed into to a finite region by
the change of variables of the form
\be
y(\om) = \tanh \big(\frac{\tau \om}{2}\big)\,\,\textrm{ and }\,\, \om(y) = \frac2{\tau}\tanh^{-1}(y) = \frac1{\tau}\log\frac{1+y}{1-y}\nn,
\ee
for some $\tau>0$ and $y\in(-1,1)$. The above change of variables reduces from
an integral on an infinite interval to one on a finite interval as follows:
\begin{align}\label{eq:int-finite}
u(\cdot,t) = \frac{1}{2\pi\ii}\int^{1}_{-1}\hu(\cdot,\zeta(\om(y))+\ii s\om(y))(\zeta^{\prime}(\om(y))+\ii s)e^{(\zeta(\om(y))+\ii s\om(y))t}\om^{\prime}(y)\dd y.
\end{align}

\subsection{Semi-discrete approximation}
The last integral formula \eqref{eq:int-finite} in the previous section can be
discretized in time using a quadrature rule.
Explicitly the semi-discrete approximation of $u(\cdot,t)$ is given by
\be
U_{N,\tau}(\cdot,t)=\frac{1}{2\pi\ii}\frac{1}{N}\sum^{N-1}_{j=-N+1}\hu(\cdot,z_j)\frac{\dd z}{\dd \om}(\om_j)\frac{\dd \om}{\dd y}(y_j)e^{z_jt},\label{eq:inv-lap-app}
\ee
where
\be
z_j=z(\om_j),\quad \om_j=\om(y_j)\quad\textrm{and}\quad y_j=\frac{j}{N},\quad\textrm{for}\,-N<j<N. \nn
\ee

It is proved in \cite{sst-para2} that the quadrature scheme
\eqref{eq:inv-lap-app}
is of arbitrary high-order spectral convergence rate if in particular the source term
$f$ has high-order regularity, stated as follows:
\begin{theorem}[Sheen-Sloan-Thom\'ee]\label{thm:time-order}
Let $u(t)$ be the solution of (\ref{eq:parabolic}) and
let $U_{N,\tau}(t)$ be its approximation
defined by (\ref{eq:inv-lap-app}). Assume that $\widehat{f}(z)$
is analytic to the right of the contour $\Ga$ and
continuous onto $\Ga$, with $\widehat{f}^{(j)}(z)$ bounded on
$\Ga$ for $j\le r$ and $r$ an integer $\ge 1$, Then,
for $t>r\tau$
\begin{equation}
\| U_{N,\tau}(t)-u(t)\|\le \frac{C_{r,s}}{N^r}
\Big( 1+t^r+\frac1{\tau^r}\Big)e^{\gamma t}
\Big( 1+\log_{+}\frac{1}{t-r\tau}  \Big)
(\|u_0\|+\max_{k\le r}\sup_{z\in \Gamma}
\|\widehat{f}^{(k)}(z)\|).
\end{equation}
\end{theorem}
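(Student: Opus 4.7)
The plan is to rewrite the error $U_{N,\tau}(t)-u(t)$ as the quadrature error of a single smooth integrand on $[-1,1]$, then bound that quadrature error using the smoothness transferred to the integrand by the resolvent estimate and the hypothesis on $\hat f^{(j)}$. Specifically, set
\[
\phi(y;t):=\hat u(z(\omega(y)))\,z'(\omega(y))\,\omega'(y)\,e^{z(\omega(y))t},
\]
so that \eqref{eq:int-finite} gives $u(t)=(2\pi\ii)^{-1}\!\int_{-1}^{1}\phi(y;t)\,dy$, while \eqref{eq:inv-lap-app} reads $U_{N,\tau}(t)=(2\pi\ii)^{-1}N^{-1}\!\sum_{|j|<N}\phi(j/N;t)$. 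The error is therefore the error of a punctured composite trapezoidal rule on $[-1,1]$ with step $1/N$, and the task is to estimate it in the operator norm.

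The next step is to bound $\partial_\omega^k\phi$. Using $\hat u(z)=(zI+A)^{-1}(u_0+\hat f(z))$ together with $\partial_z^k(zI+A)^{-1}=(-1)^k k!(zI+A)^{-k-1}$ and the resolvent bound $\|(zI+A)^{-1}\|\le M/(1+|z|)$, combined with $\|\hat f^{(k)}(z)\|\le C$ for $k\le r$, one gets a clean bound on the partial derivatives of $\hat u\cdot z'$. Differentiating $e^{z(\omega)t}$ produces factors $t\,z'(\omega)$ which are $O(t)$ since $z'(\omega)$ is bounded on $\Gamma$, so a total of $k$ derivatives produce a factor of order $(1+t^k)$. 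The chain rule for $\omega'(y)=\tfrac{2}{\tau(1-y^2)}$ and its iterates contributes factors of order $\tau^{-k}$ together with $1/(1-y^2)^{k}$-type singularities at the endpoints. After $r$ derivatives in $y$, one obtains an $L^\infty$-in-the-interior bound of the form $C_r(1+t^r+\tau^{-r})e^{\Re z(\omega(y))\,t}(\|u_0\|+\max_{k\le r}\sup_\Gamma\|\hat f^{(k)}\|)$ modulated by the endpoint blow-up.

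The quadrature error itself I would handle by applying the error representation for the trapezoidal/midpoint rule obtained via $r$-fold integration by parts against a periodised Bernoulli kernel, giving
\[
\int_{-1}^{1}\phi\,dy-\frac{1}{N}\sum_{|j|<N}\phi(y_j)
\;=\;\frac{1}{N^r}\!\int_{-1}^{1}K_r(Ny)\,\phi^{(r)}(y)\,dy+\text{boundary terms},
\]
where $K_r$ is bounded. The boundary terms vanish in the limit thanks to the decay $e^{(\gamma-\sqrt{\omega^2+\nu^2})t}$, but only when that decay wins against the $(1-y^2)^{-r}$ blow-up of $\phi^{(r)}$. Since $\omega\sim\tau^{-1}\log\tfrac{2}{1-|y|}$ near $y=\pm1$, the exponential factor behaves like $(1-|y|)^{t/\tau}$, so the integrand of $\phi^{(r)}$ is integrable, with a logarithmic endpoint contribution, exactly when $t>r\tau$; this is the origin of the assumption and of the factor $1+\log_+\!\tfrac{1}{t-r\tau}$. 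Multiplying by the overall factor $e^{\gamma t}$ from $\max_{\omega}\Re z(\omega)=\gamma-\nu\le\gamma$ collects the estimate into the stated form.

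The main obstacle is precisely Step 3: keeping simultaneous control of the three competing effects — the exponential decay $(1-|y|)^{t/\tau}$ coming from $e^{z(\omega)t}$, the $(1-y^2)^{-r}$ blow-up produced by differentiating the tanh change of variables $r$ times, and the polynomial factors $(1+t^r)$ generated by differentiating the exponential — and showing that the sharp threshold $t>r\tau$ makes the endpoint integrals converge with only a logarithmic loss. Once that balancing is carried out, the quadrature error formula yields the $N^{-r}$ factor and the remaining bookkeeping is routine.
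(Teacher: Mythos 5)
The paper itself does not prove this theorem: it is imported verbatim from Sheen--Sloan--Thom\'ee \cite{sst-para2}, and the text only says ``It is proved in \cite{sst-para2} that\dots''. So there is no in-paper argument to compare against; the relevant comparison is with the proof in that reference, and your outline does reconstruct its strategy --- represent $u(t)$ and $U_{N,\tau}(t)$ as the exact integral and the equal-weight quadrature of the single integrand $\phi(y;t)$ on $(-1,1)$, express the trapezoidal error through a Peano/Bernoulli kernel after $r$-fold integration by parts, bound $\phi^{(r)}$ via the resolvent estimate, the boundedness of $\widehat f^{(k)}$, the factors $t^k$ from differentiating $e^{z(\omega)t}$, and the $\tau^{-k}(1-y^2)^{-k}$ factors from the $\tanh$ substitution, and finally balance the endpoint decay $e^{\zeta(\omega)t}\sim(1-|y|)^{t/\tau}$ against the $(1-y^2)^{-(r+1)}$ blow-up, which is where $t>r\tau$ enters.

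Two caveats. First, your sketch stops exactly at the step you yourself identify as the main obstacle, so as written it is a correct plan rather than a proof. Second, the bookkeeping at the endpoints does not quite deliver the stated constant: with the worst-case behaviour $\phi^{(r)}(y)\sim(1-|y|)^{t/\tau-r-1}$, the crude bound $\bigl|\text{error}\bigr|\le C N^{-r}\int_{-1}^1|\phi^{(r)}|\,dy$ yields a factor of order $\tau/(t-r\tau)$, which blows up faster near the threshold than the stated $1+\log_{+}\frac1{t-r\tau}$. To recover the logarithm one has to treat the two endpoint cells (of width $1/N$) separately from the interior --- on the endpoint cells one uses the smallness of $\phi$ itself rather than of $\phi^{(r)}$, and on the interior the integral of $(1-|y|)^{t/\tau-r-1}$ truncated away from $\pm1$ produces the logarithmic, not reciprocal, dependence on $t-r\tau$. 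You should also verify explicitly that the Euler--Maclaurin boundary terms of orders $0,\dots,r-1$ vanish at $y=\pm1$; under $t>r\tau$ they do, since $\phi^{(j)}(y)\to0$ as $|y|\to1$ for $j\le r-1$. With those repairs the argument closes and coincides with the one in \cite{sst-para2}.
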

Three important remarks should be stressed.
\begin{remark}
The implication of the above theorem without source term $f$ as in our
option pricing is that the scheme is of order $O(\frac1{N^r})$ with
an arbitrarily large $r>0$ since $\hat f\equiv 0$ is certainly analytic and
$\widehat{f}^{(r)}(z)$ is bounded on $\Ga$ for positive integer $r.$
This implies that the discretization errors in the time direction using the
Laplace transformation method will be negligible compared to those caused from
the spatial discretization part in solving the Black-Scholes equation.
\end{remark}
\begin{remark}
In the summand \eqref{eq:inv-lap-app}, an important observation is that
\[
\hu(\cdot,z_j)\frac{\dd z}{\dd \om}(\om_j)\frac{\dd \om}{\dd y}(y_j),\,\,j
=0,\cdots, N,
\]
are independent of $t$. Therefore, we only have to approximate
$\hu(\cdot,z_j)$ only once by solving the complex-valued elliptic problem
\eqref{eq:lap-parabolic}
for a set of $z_j, j =0, 1, \cdots, N.$ Then, if we need the option pricing
at a different time $t$, the same set of spatial solutions $\hu(\cdot,z_j), j
=0,1,\cdots, N$, can be used in the evaluation of the summation
\eqref{eq:inv-lap-app}
with the only change in $e^{z_jt},$ for the needed time $t.$
\end{remark}
\begin{remark}
Notice that each elliptic problem \eqref{eq:lap-parabolic} for a $z_j$
from the set of $z_j, j =0, 1, \cdots, N,$
is independent of other elliptic problems for the remaining $z_j$'s. This
will minimize communication times in solving the elliptic problems
\eqref{eq:lap-parabolic} in parallel by assigning each processor to solve
an independent elliptic problem without communicating with other processors
during solving its assigned problem.
\end{remark}
\section{Laplace transformation method for the Black-Scholes equation}\label{sec:lap-bs}
In this section, we will apply the Laplace transformation method to the Black-Scholes
equation depending on one stock asset.
A basket option depending on several assets can be extended from the following
numerical scheme and analyzed in a similar way.
Taking Laplace transforms of \eqref{eq:bs-one}, we have
\be  \label{eq:bs-one-lap}
  z\hu
  -\frac12 \si^2x^2\frac{\rd^2 \hu}{\rd x^2}
  - rx\frac{\rd \hu}{\rd x} + r\hu=u_0, \quad (x,z)\in\BbbR_+\times\Gamma.
\ee
In what follows, we will examine the solvability of the above
equation and the resolvent of the Black-Scholes equation.

\subsection{The weak formulation of the Laplace transformed equation}
For a concrete mathematical analysis, we restrict our
attention to a European put option.
Since the boundary condition of a put option vanishes at infinity,
the partial differential equation can be reformulated as a weak problem
in a weighted Sobolev space.
Let $L^2(\BbbR_+)$ be the space of square integrable
complex-valued functions on $\BbbR_+$ which is endowed with the inner-product
$(v,w) =$ $\int_{\BbbR_+} v(x)\overline{w}(x)\, dx$ and the norm
$\|v\|_{L^2(\BbbR_+)} = \sqrt{(v,v)}.$ Then following \cite{comp-option},
the weighted Sobolev spaces are defined:
\begin{definition}
Let $\V$ be the weighted Sobolev space defined by
    \be
    \V=\{v\in L^2(\BbbR_+) :\,   x\frac{\rd v}{\rd x}\in L^2(\BbbR_+) \},  \nn
    \ee
equipped with the the semi-norm and the norm
$$
|v|_\V=\Big( \int_0^\infty \left| x\frac{\rd v}{\rd x}\right|^2 \,dx\Big)^{\frac12},\quad
\|v\|_\V=\Big( \int_0^\infty | v|^2 + \left|x \frac{\rd v}{\rd x}\right|^2 \,dx \Big)^{\frac12}.
$$
Similarly, let $\Z$ be the weighted Sobolev space defined by
    \be
    \Z=\{v\in L^\infty(\BbbR_+) : x\frac{\rd v}{\rd x}\in L^\infty(\BbbR_+) \},  \nn
    \ee
equipped with the the semi-norm and the norm
$$
|v|_\Z= \esssup_{x\in\BbbR_+} \left| x\frac{\rd v}{\rd x}\right|,\quad
\|v\|_\Z=\max\left\{\esssup_{x\in\BbbR_+} \left|v\right|,\,
\esssup_{x\in\BbbR_+} \left| x\frac{\rd v}{\rd x}\right|\right\}.
$$
\end{definition}
Since the boundary value vanishes at infinity, we have the following
Poincar\'{e}-type inequality,
which is an extension of the real-valued version, Lemma 2.7 given in \cite{comp-option}:
\begin{lemma}\label{lem:weighted-poincare}
The following bound holds:
\begin{eqnarray}
    \| v \|_{L^2(\BbbR_+)}\le 2 |v|_\V\qquad\forall v\in \V.
  \end{eqnarray}
\end{lemma}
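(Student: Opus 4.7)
The plan is to derive the bound via an integration-by-parts trick that is the weighted analogue of the classical Hardy inequality on the half-line, and then use a Cauchy--Schwarz step followed by dividing through by $\|v\|_{L^2(\BbbR_+)}$.

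First I would work with a dense subclass, say $v\in C_c^\infty(0,\infty)$, where the boundary terms at $0$ and $\infty$ vanish automatically. Writing $1=\frac{d}{dx}x$ and integrating by parts gives
\begin{equation*}
\int_0^\infty |v(x)|^2\,dx
 = \bigl[x|v(x)|^2\bigr]_0^\infty
  - \int_0^\infty x\,\tfrac{d}{dx}|v(x)|^2\,dx
 = -2\,\Re\!\int_0^\infty x\,\overline{v(x)}\,v'(x)\,dx,
\end{equation*}
since $\frac{d}{dx}|v|^2 = 2\Re(\bar v\,v')$. Now I would apply Cauchy--Schwarz, writing $x\,\overline{v}\,v' = \bar v \cdot (x v')$, to obtain
\begin{equation*}
\|v\|_{L^2(\BbbR_+)}^{2}
 \le 2\int_0^\infty |v(x)|\,|xv'(x)|\,dx
 \le 2\,\|v\|_{L^2(\BbbR_+)}\,|v|_{\V}.
\end{equation*}
Cancelling one factor of $\|v\|_{L^2(\BbbR_+)}$ (trivial if it is zero) yields the stated estimate $\|v\|_{L^2(\BbbR_+)}\le 2|v|_{\V}$ on the dense subclass.

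Finally I would extend to all $v\in\V$ by a limiting argument: approximate $v$ by a sequence $v_n\in C_c^\infty(0,\infty)$ in the $\V$-norm, apply the inequality to each $v_n$, and pass to the limit in both sides using continuity of the two norms.

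The main obstacle is the justification of the boundary terms vanishing, i.e.\ the density/approximation step. A priori, membership in $\V$ only guarantees $v\in L^2(\BbbR_+)$ and $xv'\in L^2(\BbbR_+)$, which does not immediately force $x|v(x)|^2\to 0$ as $x\to 0^+$ or $x\to\infty$ pointwise. I would handle this by either (i) invoking the density of $C_c^\infty(0,\infty)$ in $\V$ (this is the standard route and mirrors the approximation used in the real-valued Lemma~2.7 of \cite{comp-option}), or (ii) arguing directly by truncating to $[\epsilon,R]$, performing the integration by parts there, and choosing sequences $\epsilon_n\to 0$, $R_n\to\infty$ along which $\epsilon_n |v(\epsilon_n)|^2\to 0$ and $R_n|v(R_n)|^2\to 0$---the existence of such sequences follows from the fact that the integral $\int_0^\infty|v|^2\,dx$ is finite, so $|v(x)|^2$ cannot remain comparable to $1/x$ near either endpoint.
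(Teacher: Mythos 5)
Your proof is correct and takes essentially the same route as the paper's: the identity $\|v\|_{L^2(\BbbR_+)}^2=-2\Re\int_0^\infty x\,\overline{v}\,v'\,dx$ obtained by integration by parts, followed by Cauchy--Schwarz and cancellation of one factor of $\|v\|_{L^2(\BbbR_+)}$. The only difference is that you make explicit the density/boundary-term justification that the paper leaves implicit, which is a welcome addition rather than a deviation.
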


\begin{proof}
Let $v\in \V$ be arbitrary. Then, by integration by parts,
the following relation holds:
    $$
    -\int_{\BbbR_+}v\overline{v}\,dx = \int_{\BbbR_+}xv\overline{v}_x\,dx +
    \int_{\BbbR_+}x\overline{v}v_x\,dx.
    $$
Thus we obtain
    $$
    \int_{\BbbR_+}|v|^2\,dx \le 2 \Big( \int_{\BbbR_+}|v|^2\,dx \Big)^{\frac12}\Big( \int_{\BbbR_+}|xv_x|^2\,dx \Big)^{\frac12}.
    $$
    This completes the proof.
\end{proof}

From now on, assume that the initial data $u_0\in \V'$, where $\V'$ is the dual
space of $\V$.
Denote by $\V^\prime$ the topological dual space of $\V$ with the norm defined
by
\[
\|u\|_{\V^\prime} = \sup_{v\in \V\setminus \{0\}} \frac{\<u,v\>}{\|v\|_\V},
\]
where $\<\cdot,\cdot\>$ is the duality pairing of $\V^\prime$ and $\V$.

Then, multiplying \eqref{eq:bs-one-lap} by a test function $v\in \V$ and
integrating on $\BbbR_+$,
one obtains the weak problem of \eqref{eq:bs-one-lap} as follows:
For each $z\in\G$, find $\hat u(z)\in \V$ such that
\begin{eqnarray}\label{eq:weakprob}
A_z(\hat u,v) = \<u_0,v\>\quad \forall v \in \V,
\end{eqnarray}
where the bilinear form $A_z(\cdot,\cdot): \V\times \V \rightarrow \mathbb C$
is defined by
\begin{eqnarray}
A_z(u,v) = z (u,v) + B(u,v) \quad \forall u,v\in \V.
\end{eqnarray}
where
\be\nonumber
B(u,v) = \frac12\int_{\BbbR_+}\sigma^2(x)x^2\frac{\rd u}{\rd
x}\frac{\rd \overline{v}}{\rd x}\,dx &+& \int_{\BbbR_+}\Big( -r(x) +
\sigma^2(x)+x\sigma(x)\frac{\rd \si}{\rd x}\Big)x\frac{\rd u}{\rd
x}\overline{v}\,dx \\
&&+\int_{\BbbR_+}r(x)u\overline{v}\,dx,
\nonumber
\ee
The bilinear form $A_z(\cdot,\cdot)$, of course, depends on $z \in \Gamma,$
and so does the solution $\hat u.$

\begin{assumption}\label{ass:bs-two-lap}
Assume that $\si\in \Z$ and $r\in L^\infty(\BbbR_+).$ Moreover, assume that
there exists a positive constant $\underline{\si}$
such that for all $x\in\BbbR_+$ such that
$$
0<\underline{\si}\le\si(x).
$$
\end{assumption}

Set
\[
\mu= \begin{cases}
  (\|r\|_{L^{\infty}(\BbbR_+)}-\si^2)^2/(\underline{\si})^2, & \text{
    if } \sigma(x) \text{ is a constant},\\
  (\|r\|_{L^{\infty}(\BbbR_+)}+2\|\si\|_\Z^2)^2/(\underline{\si})^2, & \text{
    otherwise.}
\end{cases}
\]
We now have the following two lemmas for the continuity and coercivity of
$A_z(\cdot,\cdot):\V\times \V\rightarrow \mathbb C$.
\begin{lemma}\label{lem:conti}
Under \assref{ass:bs-two-lap}, the bilinear form
$A_z(\cdot,\cdot):\V\times \V\rightarrow \mathbb C$ is continuous.
\end{lemma}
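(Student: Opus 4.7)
The plan is to bound $|A_z(u,v)|$ term by term using Cauchy--Schwarz, where each coefficient lands in $L^\infty(\BbbR_+)$ thanks to \assref{ass:bs-two-lap}. Specifically, I would split
\[
A_z(u,v) = z(u,v) + \tfrac12 I_1(u,v) + I_2(u,v) + I_3(u,v),
\]
with $I_1$ the diffusion integral, $I_2$ the first-order integral, and $I_3$ the reaction integral, and estimate each piece separately.

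First I would verify that all the coefficients appearing in $B(\cdot,\cdot)$ are in $L^\infty(\BbbR_+)$. Since $\si\in\Z$, both $\si$ and $x\,\rd\si/\rd x$ belong to $L^\infty$; hence $\si^2$, $x\si(\rd\si/\rd x)$, and therefore the combination $-r+\si^2+x\si(\rd\si/\rd x)$, all lie in $L^\infty$, with norms controlled by $\|r\|_{L^\infty}+\|\si\|_\Z^{\,2}$ (up to constants). Using these $L^\infty$ bounds:
\begin{itemize}
\item For $I_1$, Cauchy--Schwarz gives $|I_1(u,v)|\le \|\si^2\|_{L^\infty}\,|u|_\V|v|_\V$.
\item For $I_2$, Cauchy--Schwarz on $x(\rd u/\rd x)\in L^2$ and $\overline v\in L^2$ yields $|I_2(u,v)|\lesssim (\|r\|_{L^\infty}+\|\si\|_\Z^{\,2})\,|u|_\V\,\|v\|_{L^2}$.
\item For $I_3$, $|I_3(u,v)|\le \|r\|_{L^\infty}\,\|u\|_{L^2}\|v\|_{L^2}$.
\item For the $z(u,v)$ term, $|z(u,v)|\le |z|\,\|u\|_{L^2}\|v\|_{L^2}$.
\end{itemize}

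Finally I would assemble these: since $\|\cdot\|_{L^2}\le\|\cdot\|_\V$ and $|\cdot|_\V\le\|\cdot\|_\V$ by the definition of the $\V$-norm, each estimate collapses into a bound of the form $C_\star\|u\|_\V\|v\|_\V$, and summing produces
\[
|A_z(u,v)|\le C(z)\,\|u\|_\V\|v\|_\V,\qquad C(z)=|z|+C_1\bigl(\|\si\|_\Z,\|r\|_{L^\infty}\bigr),
\]
which is the desired continuity. I do not expect any real obstacle here; the only mildly delicate point is writing the middle term $x\si(\rd\si/\rd x)$ as $\si\cdot(x\,\rd\si/\rd x)$ so that the $\Z$-norm bound applies and one avoids having to differentiate $\si^2$ directly. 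The estimate is not uniform in $z\in\Ga$, but continuity at each fixed $z$ is all that is needed to make \eqref{eq:weakprob} a well-posed sesquilinear problem once coercivity is also established in the companion lemma.
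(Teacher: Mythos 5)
Your proposal is correct and follows essentially the same route as the paper: a term-by-term Cauchy--Schwarz estimate after observing that \assref{ass:bs-two-lap} places all coefficients of $B(\cdot,\cdot)$ in $L^\infty(\BbbR_+)$. The only cosmetic difference is that the paper converts the $L^2$-norms into $\V$-seminorms via the weighted Poincar\'e inequality of \lemref{lem:weighted-poincare}, whereas you use the trivial bound $\|\cdot\|_{L^2}\le\|\cdot\|_\V$; both yield the stated continuity.
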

\begin{proof}
    Let $u,v\in \V.$ Then,
    \begin{align}
    \Big| \int_{\BbbR_+} \frac12\si^2(x)x^2\frac{\rd u}{\rd x}\frac{\rd
      \overline{v}}{\rd x}\,dx \Big| &\le \frac12 |\si|_\Z^2|u|_\V|v|_\V ,\nn \\
    \Big|\int_{\BbbR_+}\Big( -r(x) + \sigma^2(x)+x\sigma(x)\frac{\rd \si}{\rd
      x}\Big)x\frac{\rd u}{\rd x}\overline{v}\,dx\Big|
&\le \underline{\si} \sqrt{\mu} \, |u|_\V\|v\|_{L^2(\BbbR_+)}\nn \\
    &\le 2 \underline{\si} \sqrt{\mu}\,    |u|_\V|v|_\V, \nn \\
    \Big|\int_{\BbbR_+}\Big(z+r(x)\Big)u\overline{v}\,dx\Big| & \le (|z|+\|r\|_{L^\infty(\BbbR_+)})|u|_\V|v|_\V \nn,
    \end{align}
where \lemref{lem:weighted-poincare} is applied in the bound of the second inequality.
Therefore the bilinear form $A_z(\cdot,\cdot):\V\times \V\rightarrow \mathbb
C$ is continuous.
\end{proof}

\begin{lemma}\label{lem:coer}
Under \assref{ass:bs-two-lap}, there is a non-negative constant $C_1$, which
is independent of $u$ and $z$, such that for all $u\in \V$
    $$
    \Re\{A_z(u,u)\} \ge \frac{\underline{\sigma}^2}{4}|u|^2_\V-(|z|+C_1)\|u\|^2_{L^2(\BbbR_+)}.
    $$
\end{lemma}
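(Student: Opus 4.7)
The plan is to write $A_z(u,u) = z(u,u) + B(u,u)$, take the real part, and bound the three integrals in $B(u,u)$ separately, with the first-order cross term being the only one that requires real work.

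First I would dispose of the easy pieces. The term $z(u,u)$ contributes $\Re(z)\|u\|^2_{L^2(\BbbR_+)}$, which is at least $-|z|\,\|u\|^2_{L^2(\BbbR_+)}$. The principal diffusion term $\frac12\int_{\BbbR_+}\sigma^2 x^2|u_x|^2\,dx$ is already real, nonnegative, and under \assref{ass:bs-two-lap} bounded below by $\frac{\underline{\sigma}^2}{2}|u|_\V^2$. The zeroth-order term $\int_{\BbbR_+} r|u|^2\,dx$ is real and (even if one does not use sign information on $r$) is bounded below by $-\|r\|_{L^\infty(\BbbR_+)}\|u\|^2_{L^2(\BbbR_+)}$.

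The main step is controlling the first-order cross term $I:=\int_{\BbbR_+}g(x)\,x u_x \overline{u}\,dx$ with $g(x)=-r(x)+\sigma^2(x)+x\sigma(x)\sigma_x(x)$. Using $\sigma\in\Z$ and $r\in L^\infty(\BbbR_+)$ I would bound $\|g\|_{L^\infty(\BbbR_+)}\le \|r\|_{L^\infty(\BbbR_+)}+2\|\sigma\|_\Z^2=\underline{\sigma}\sqrt{\mu}$ (in the non-constant case; the constant case is just a simplification). Cauchy--Schwarz gives $|I|\le \underline{\sigma}\sqrt{\mu}\,|u|_\V\|u\|_{L^2(\BbbR_+)}$, and then a weighted Young inequality $ab\le \tfrac{\epsilon}{2}a^2+\tfrac{1}{2\epsilon}b^2$ with $\epsilon = \underline{\sigma}/(2\sqrt{\mu})$ yields
\begin{equation*}
|\Re I|\le |I| \le \frac{\underline{\sigma}^2}{4}|u|_\V^2+\mu\,\|u\|^2_{L^2(\BbbR_+)}.
\end{equation*}
This is the heart of the argument: the weight is chosen precisely so that the $|u|_\V^2$ contribution exactly matches half of the diffusion term's bound, leaving $\tfrac{\underline{\sigma}^2}{4}|u|_\V^2$ behind.

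Assembling the four pieces gives
\begin{equation*}
\Re\{A_z(u,u)\}\ge \Bigl(\tfrac{\underline{\sigma}^2}{2}-\tfrac{\underline{\sigma}^2}{4}\Bigr)|u|_\V^2 -\bigl(|z|+\mu+\|r\|_{L^\infty(\BbbR_+)}\bigr)\|u\|^2_{L^2(\BbbR_+)},
\end{equation*}
so the conclusion holds with $C_1=\mu+\|r\|_{L^\infty(\BbbR_+)}$, which is manifestly independent of $u$ and $z$. The only subtlety worth flagging is the choice of Young constant; everything else is bookkeeping. I expect no serious obstacle here, since the definition of $\mu$ has been tailored in advance to make the Young step close cleanly.
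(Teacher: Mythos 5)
Your proposal is correct and follows essentially the same route as the paper: lower-bound the diffusion term by $\frac{\underline{\sigma}^2}{2}|u|_\V^2$, absorb the first-order cross term via Cauchy--Schwarz and a Young inequality tuned to leave exactly $\frac{\underline{\sigma}^2}{4}|u|_\V^2$ plus $\mu\|u\|^2_{L^2(\BbbR_+)}$, and control the $(z+r)$ term by $(|z|+\|r\|_{L^\infty(\BbbR_+)})\|u\|^2_{L^2(\BbbR_+)}$, yielding $C_1=\mu+\|r\|_{L^\infty(\BbbR_+)}$. The only cosmetic difference is that the paper groups $z$ and $r$ into a single zeroth-order term and also cites the coercivity bound \eqref{eq:coercive} from the literature before re-deriving it.
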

\begin{proof}

Under \assref{ass:bs-two-lap}, the following result is known in
\cite{comp-option},
\be 
\Re\{B(u,u)\} \ge
\frac{\underline{\sigma}^2}{4}|u|^2_\V-\mu\|u\|^2_{L^2(\BbbR_+)}.
\label{eq:coercive} 
\ee 
    Let $u\in \V$ be arbitrary. Then,
    \begin{align}
    \int_{\BbbR_+} \frac12\si^2(x)x^2\frac{\rd u}{\rd x}\frac{\rd \overline{u}}{\rd x}\,dx &\ge \frac{\underline{\si}^2}{2}|u|_\V^2 , \nn\\
    \Big|\Re\{\int_{\BbbR_+}\Big( -r(x) + \si^2(x)+x\si(x)\frac{\rd \si}{\rd
      x}\Big)x\frac{\rd u}{\rd x}\overline{u}\,dx\}\Big|
& \le \underline{\si}  \sqrt{\mu}\,  |u|_\V\|u\|_{L^2(\BbbR_+)} \nn\\
    &\le \frac{\underline{\si}^2}{4}|u|^2_\V + \mu\|u\|^2_{L^2(\BbbR_+)},&  \nn\\
    \Big|\Re\{\int_{\BbbR_+}\Big(z+r(x)\Big)u\overline{u}\,dx\}\Big| & \le
    (|z|+ \|r\|_{L^\infty(\BbbR_+)}  )\|u\|^2_{L^2(\BbbR_+)},\nn
    \end{align}
where Young's inequality is used in the bound of the second inequality and
$\mu$ depends on $\underline{\si}$.
A combination of these inequalities completes the lemma.
\end{proof}
The compactness of embedding $L^2(\mathbb R_+)\hookrightarrow \V$,
\lemref{lem:conti} and \lemref{lem:coer} imply that there is a unique solution
in the case of European put options. We summarize the above results in the
following theorem.
\begin{theorem} Suppose $u_0\in \V'$. Then,
under Assumption \ref{ass:bs-two-lap}
Problem \eqref{eq:weakprob} has a unique solution $\hu\in \V.$
\end{theorem}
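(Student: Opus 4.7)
The plan is to cast \eqref{eq:weakprob} in a Fredholm framework. First I would fix $\kappa > |z|+C_1$ and define the shifted bilinear form $\tilde A_z(u,v) := A_z(u,v) + \kappa (u,v)$. Combining \lemref{lem:coer} with \lemref{lem:weighted-poincare} yields
\[
\Re \tilde A_z(u,u) \ge \frac{\underline{\sigma}^2}{4}|u|_\V^2 + (\kappa-|z|-C_1)\|u\|_{L^2(\BbbR_+)}^2 \ge c\|u\|_\V^2,
\]
for some $c>0$, so $\tilde A_z$ is coercive on $\V$. Continuity is inherited from \lemref{lem:conti}, so the Lax--Milgram lemma provides a bounded inverse $T : \V'\to \V$ of the operator associated with $\tilde A_z$.

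Next I would rewrite \eqref{eq:weakprob} as $\tilde A_z(u,v) = \langle u_0, v\rangle + \kappa (u,v)$ for all $v \in \V$, or equivalently $u = T(u_0 + \kappa \iota u)$ in $\V$, where $\iota$ is the natural chain of embeddings $\V \hookrightarrow L^2(\BbbR_+) \hookrightarrow \V'$. Using the compact embedding $\V \hookrightarrow L^2(\BbbR_+)$ cited just before the theorem, the operator $\kappa T\iota : \V \to \V$ is compact; hence $I - \kappa T\iota$ is Fredholm of index zero on $\V$, and existence of a solution is equivalent to uniqueness for the homogeneous weak problem.

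The uniqueness step is the main obstacle: one has to show that if $A_z(u,v)=0$ for every $v\in\V$ then $u=0$, which is not an immediate consequence of \lemref{lem:coer} because $\Re z$ on the hyperbolic contour $\Gamma$ may be arbitrarily negative. The natural route is to exploit the sectorial assumption $\sigma(A)\subset\Sigma_\delta$ from \secref{sec:laplace}: the contour parameters $\gamma$, $\nu$, $s$ are chosen precisely so that $\Gamma$ lies in the resolvent set of $-A$, so $zI+A$ is injective on its natural domain. Identifying the weak kernel on $\V$ with this strong kernel—via elliptic regularity in the interior together with the decay at infinity built into $\V$, which reflects the vanishing boundary condition of the European put—then closes the argument, and the Fredholm alternative delivers both existence and uniqueness.
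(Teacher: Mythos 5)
Your first two paragraphs reproduce, in more honest detail, exactly what the paper does: its entire proof is the single sentence preceding the theorem, invoking the compact embedding, \lemref{lem:conti} and \lemref{lem:coer}, i.e.\ your G\aa rding-plus-Fredholm scheme. The genuine gap is in your third paragraph, and you are right that uniqueness is the crux --- but your way of closing it is circular. The inclusion $\si(A)\subset\Sigma_\delta$ and the statement that $\Gamma$ lies in the resolvent set of $-A$ are \emph{assumptions} of the abstract framework of \secref{sec:laplace}; for the concrete Black--Scholes operator they are precisely what must be verified (this is the content of \lemref{le:kappa}, which the paper only proves \emph{after} the theorem), and injectivity of $zI+A$ is equivalent to the uniqueness you are trying to establish, so it cannot be invoked as a premise. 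The further detour through interior elliptic regularity to identify the weak and strong kernels is both unnecessary and delicate, since the operator degenerates at $x=0$.

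The gap closes entirely at the weak level by the computation in the proof of \lemref{le:kappa}. If $u\in\V$ satisfies $A_z(u,v)=0$ for all $v\in\V$, take $v=u$, so that $F:=B(u,u)+z\|u\|_{L^2(\BbbR_+)}^2=0$; writing $z-\kappa=(\xi+\ii\eta)^2$ and combining the real and imaginary parts of $F$ as in that proof gives
$$
\frac{\underline{\si}^2}{8}|u|_\V^2+\xi^2\|u\|_{L^2(\BbbR_+)}^2\le\Big(1+\frac12\delta\Big)|F|=0,
$$
hence $u=0$ for every $z$ in the sector $\Sigma_{\kappa,\theta}$ with $\kappa=(1+\delta^2/2)\mu$, and the contour $\Gamma$ is chosen to lie inside such a sector. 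Note also that the same estimate, with $F=\<u_0,u\>$, yields coercivity of $|A_z(\cdot,\cdot)|$ on $\V$ for each such $z$, so Lax--Milgram gives existence \emph{and} uniqueness in one stroke with no compactness at all. This is worth preferring, because the compactness of $\V\hookrightarrow L^2(\BbbR_+)$ that both you and the paper lean on is questionable: the functions $\phi_n(x)=\sqrt{n}\,\phi(nx)$ with $\phi\in C_c^\infty(1,2)$ have $\|\phi_n\|_{L^2}$ and $|\phi_n|_\V$ constant in $n$ while converging weakly to zero, so bounded sets in $\V$ can concentrate at the degenerate point $x=0$ without any $L^2$-convergent subsequence.
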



\subsection{Resolvent of the Black-Scholes equation}
In \secref{sec:laplace}, the resolvent of a spatial
operator is assumed to be bounded in a given sector. This assumption for the Black-Scholes 
equation will be verified in this subsection.

Denote by $R(z,-B)=(zI+B)^{-1}$  the resolvent of $-B$, so that
for each $f\in \V^{\prime}$, $v = R(z,-B)f$ is the solution of
\be
B(v,\phi)+z(v,\phi)=\<f,\phi\>,\qquad \forall \phi\in \V. \label{eq:resolvent-eq}
\ee
Then we have the following lemma, which is an extension of Lemma 2.1
in \cite{resolvent-one-dim}.
\begin{lemma}\label{le:kappa}
Under \assref{ass:bs-two-lap}, for any
$\theta\in(\frac12\pi,\pi)$ there are $C = C(\theta)\ge 0$ and
$\kappa=\kappa(\theta, r, \sigma)>0$, independent of
$z$ and $f$, such
that
$$
\|R(z,-B)f\|_{L^2({\BbbR_+})}\le\frac{C}{|z-\kappa|}\|f\|_{L^2({\BbbR_+})},\qquad \text{for }z\in\Sigma_{\kappa,\theta},\,\,f\in L^2({\BbbR_+})
$$
where
$\Sigma_{\kappa,\theta}=\{z\in\mathbb{C}:|\arg(z-\kappa)|\le\theta\}$.
Explicitly, the coefficients are given by
$C = (1+\frac12\delta)(1+\delta^2)$ and 
$\kappa= \left(1 + \frac{\delta^2}{2}\right) \mu,$
where $\delta=\tan\frac{\theta}{2}.$
\end{lemma}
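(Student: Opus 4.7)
The strategy is to test the resolvent equation (\ref{eq:resolvent-eq}) with $\phi = v$ itself and read off the bound from the resulting scalar identity
\begin{equation*}
B(v,v) + z\|v\|_{L^2(\BbbR_+)}^2 = \<f, v\>,
\end{equation*}
using $|\<f,v\>|\le\|f\|_{L^2(\BbbR_+)}\|v\|_{L^2(\BbbR_+)}$. The first step is to separate $B(v,v)$ into real and imaginary parts. The diffusion integral $\tfrac12\int \si^2 x^2 |v_x|^2\,dx$ and the reaction integral $\int r|v|^2\,dx$ are real, so $\Im B(v,v)$ is produced entirely by the first-order term $\int(-r+\si^2+x\si\si_x)\,x v_x \bar v\,dx$. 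Combining the coercivity estimate (\ref{eq:coercive}) with the coefficient bound already used inside \lemref{lem:conti} yields
\begin{equation*}
\Re B(v,v) \ge \tfrac{\underline{\si}^2}{4}|v|_\V^2 - \mu\|v\|_{L^2(\BbbR_+)}^2,\quad |\Im B(v,v)| \le \underline{\si}\sqrt{\mu}\,|v|_\V\,\|v\|_{L^2(\BbbR_+)}.
\end{equation*}

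The next step is to shift by $\kappa$. Set $\tilde z = z - \kappa$ and $\tilde B(v,v):=B(v,v)+\kappa\|v\|_{L^2(\BbbR_+)}^2$, so that the equation reads $\tilde z\|v\|_{L^2(\BbbR_+)}^2 + \tilde B(v,v) = \<f,v\>$. A weighted Young inequality applied to the product $|v|_\V\|v\|_{L^2(\BbbR_+)}$ shows that for $\kappa>\mu$,
\begin{equation*}
|\Im \tilde B(v,v)| \le \sqrt{\mu/(\kappa-\mu)}\,\Re\tilde B(v,v),
\end{equation*}
so $\tilde B(v,v)$ lies in the closed sector about the positive real axis of half-angle $\arctan\sqrt{\mu/(\kappa-\mu)}$. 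The explicit value $\kappa = (1+\delta^2/2)\mu$ is dictated by the requirement that this sectoriality angle be strictly smaller than $\pi-\theta$ with precisely enough margin to give the stated constant $C=(1+\tfrac{\delta}{2})(1+\delta^2)$.

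The final step is a geometric combination. Since $\tilde z$ satisfies $|\arg\tilde z|\le\theta$ while $\tilde B(v,v)$ lies in a sector complementary to it, multiplying the identity $\tilde z\|v\|^2+\tilde B(v,v)=\<f,v\>$ by $e^{-i\alpha}$ for an $\alpha$ of order $\pm\theta/2$ (with sign matching $\arg\tilde z$) and taking real parts produces an inequality of the form
\begin{equation*}
c_1\,|\tilde z|\|v\|_{L^2(\BbbR_+)}^2 + c_2\,\Re\tilde B(v,v) \le \|f\|_{L^2(\BbbR_+)}\|v\|_{L^2(\BbbR_+)},
\end{equation*}
with $c_1,c_2>0$ depending only on $\theta$ through $\delta$; since both terms on the left are nonnegative, one concludes $|\tilde z|\|v\|_{L^2(\BbbR_+)}\le C\|f\|_{L^2(\BbbR_+)}$, which is the claimed bound.

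The main obstacle is precisely this last trigonometric bookkeeping: one must choose $\alpha$ so that $\Re[e^{-i\alpha}\tilde B(v,v)]$ stays nonnegative uniformly in $v$ while $\cos(\arg\tilde z-\alpha)$ retains a positive lower bound uniformly over $|\arg\tilde z|\le\theta$, and then track how the constants $\delta=\tan(\theta/2)$, $\mu$ and $\underline\si$ propagate into $C$ and $\kappa$. Modulo this (otherwise routine) calculation the argument is the complex-valued analogue of Lemma~2.1 of \cite{resolvent-one-dim}.
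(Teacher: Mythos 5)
Your opening moves coincide with the paper's: test \eqref{eq:resolvent-eq} with $\phi=v$, bound $\Re B(v,v)$ from below by \eqref{eq:coercive}, and bound $|\Im B(v,v)|$ by $\underline{\si}\sqrt{\mu}\,|v|_\V\|v\|_{L^2(\BbbR_+)}$ after observing that only the first-order term of $B$ is non-real. Where you genuinely diverge is in how this information is combined. The paper writes $z-\kappa=(\xi+\ii\eta)^2$ with $|\eta|\le\delta\xi$, takes real and imaginary parts of $F=B(v,v)+z\|v\|^2_{L^2(\BbbR_+)}$ separately, adds $\frac{\delta}{2}$ times the imaginary-part estimate to the real-part estimate, and picks $\kappa=(1+\frac{\delta^2}{2})\mu$ exactly so that the Young remainder $\frac{\delta^2\mu}{2}\|v\|^2_{L^2(\BbbR_+)}$ is absorbed; the constant $C=(1+\frac{\delta}{2})(1+\delta^2)$ then falls out of $|z-\kappa|\le(1+\delta^2)\xi^2$ together with the accumulated factor $(1+\frac{\delta}{2})|F|$. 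You instead prove that $\widetilde B=B+\kappa\|\cdot\|^2_{L^2(\BbbR_+)}$ is sectorial and rotate the identity by $e^{-\ii\alpha}$. That is a viable, and arguably more standard, numerical-range argument, and your sectoriality inequality $|\Im\widetilde B(v,v)|\le\sqrt{\mu/(\kappa-\mu)}\,\Re\widetilde B(v,v)$ is correct.

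Two points in the bookkeeping you defer do need repair. First, $\alpha$ of order $\pm\theta/2$ does not work: with $\kappa=(1+\frac{\delta^2}{2})\mu$ the half-angle of the sector containing $\widetilde B(v,v)$ is $\arctan(\sqrt{2}/\delta)$, so keeping $\Re\{e^{-\ii\alpha}\widetilde B(v,v)\}\ge 0$ requires $|\alpha|\le\arctan(\delta/\sqrt{2})$, while $\tan(\theta/2)=\delta>\delta/\sqrt{2}$; at $\alpha=\theta/2$ the rotated real part of $\widetilde B$ can be negative. A valid window does exist --- the compatibility condition $\theta-\frac{\pi}{2}<\arctan(\delta/\sqrt{2})$ reduces to $(\sqrt{2}-2)\delta^2<\sqrt{2}$, which always holds --- but $\alpha$ must be taken in $(\theta-\frac{\pi}{2},\,\arctan(\delta/\sqrt{2})]$. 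Second, the explicit coefficients in the statement are not ``dictated'' by your margin requirement: in your framework any $\kappa>\mu$ gives sectoriality with some aperture, and the constant the rotation produces is of the form $1/\cos(\theta-\alpha)$, which does not reproduce $(1+\frac{\delta}{2})(1+\delta^2)$. So your route proves the existence of admissible $C(\theta)$ and $\kappa$ (which is what the contour selection actually needs), but to obtain the paper's stated values of $C$ and $\kappa$ one has to follow its square-root substitution.
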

\begin{proof}
For $z\in\Sigma_{\kappa,\theta}$, we write
$$
z-\kappa=(\xi+\ii\eta)^2=\xi^2-\eta^2+2\ii\xi\eta\qquad\text{with}\,\,\xi+\ii\eta\in\Sigma_{0,\theta/2},\,\,\xi,\eta\in\BbbR,
$$
for any $\kappa >0.$
Setting 
$\delta=\tan\frac{\theta}{2}$, we see that $\delta > 1$ and
$|\eta|\le\delta\xi.$
and thus the following inequality holds:
$$
\xi^2\le|z-\kappa|=\xi^2+\eta^2\le(1+\delta^2)\xi^2.
$$
Set
$$
F = B(v,v) + z\|v\|_{L^{2}(\BbbR_+)}^2.
$$
Taking the real part of $F$, we obtain
$$
\Re B(v,v)+(\kappa+\xi^2-\eta^2)\|v\|_{L^{2}(\BbbR_+)}^2=\Re F.
$$
By the inequality \eqref{eq:coercive} we have 
\be
\frac{\underline{\sigma}^2}{4}|v|_\V^2+(\kappa+\xi^2-\eta^2-\mu)\|v\|_{L^{2}(\BbbR_+)}^2\le|F|.\label{eq:real-F}
\ee 
By taking the imaginary part of $F$, we have
$$
\Im B(v,v)+2\xi\eta\|v\|_{L^{2}(\BbbR_+)}^2=\Im F,
$$
and since $\Im B(v,v)=\Im \int_{\BbbR_+}\Big( -r(x) +
\sigma^2(x)+x\sigma(x)\frac{\rd \si}{\rd x}\Big)x\frac{\rd v}{\rd
x}\overline{v}\,dx$,
$$
2\xi|\eta|\,\|v\|_{L^{2}(\BbbR_+)}^2\le|F|+
 \underline{\si} \sqrt{\mu} \,|v|_\V\|v\|_{L^{2}(\BbbR_+)}.
$$
Multiplying by $\frac12\delta=\frac12\tan(\frac12\theta)$ the last estimate, 
we have
$$
\eta^2\|v\|_{L^{2}(\BbbR_+)}^2\le\delta|\eta|\,\|v\|_{L^{2}(\BbbR_+)}^2
\le\frac12\delta|F|+\frac12\delta\,  \underline{\si} \sqrt{\mu} \,|v|_\V\|v\|_{L^{2}(\BbbR_+)}.
$$
Adding this to \eqref{eq:real-F}, we obtain
$$
\frac{\underline{\sigma}^2}{4}|v|_\V^2+(\kappa+\xi^2-\mu)\|v\|_{L^{2}(\BbbR_+)}^2
\le(1+\frac12\delta)|F|+\frac{\underline{\sigma}^2}{8}|v|_\V^2
+\frac{\delta^2 \mu }{2}\|v\|_{L^{2}(\BbbR_+)}^2.
$$
With the choice of
$$
\kappa=\mu+\frac{\delta^2 \mu}{2}= \left(1 + \frac{\delta^2}{2}\right) \mu,
$$
we have the following inequality
$$
\frac{\underline{\sigma}^2}{8}|v|_\V^2+\xi^2\|v\|_{L^{2}(\BbbR_+)}^2\le(1+\frac12\delta)|F|.
$$
If $f\in L^2(\BbbR_+)$, we take $\phi=v$ in \eqref{eq:resolvent-eq}, then we have
$$
\frac{\underline{\sigma}^2}{8}|v|_\V^2+\xi^2\|v\|_{L^{2}(\BbbR_+)}^2\le(1+\frac12\delta)\Big|\int_{\BbbR_+}fv\,dx\Big|\le(1+\frac12\delta)\|f\|_{L^2(\BbbR_+)}\|v\|_{L^2(\BbbR_+)},
$$
and therefore
$$
\|R(z,-B)f\|_{L^{2}(\BbbR_+)}\le\frac{1+\frac12\delta}{\xi^2}\|f\|_{L^{2}(\BbbR_+)}\le\frac{(1+\frac12\delta)(1+\delta^2)}{|z-\kappa|}\|f\|_{L^{2}(\BbbR_+)}.
$$
This completes the proof.
\end{proof}
From this lemma one can determine the location of a integration contour. In particular,
if one sets the asymptotic slope of a hyperbola as $s$, the contour has 
to cut the real line at a point which is larger than
$$
\kappa = \left( 1+ \frac{\tan^2(\frac12\arctan(s))}{2}\right)\,\mu .
$$
In the special cast that $r(x)=r$ and $\sigma(x)=\sigma$ are constants, 
$\kappa$ can be given by
\be
\kappa =  \left( 1+ \frac{\tan^2(\frac12\arctan(s))}{2}\right)\,
\frac{|r-\sigma^2|^2}{\sigma^2}.\label{eq:kappa-con}
\ee

\subsection{The transparent boundary condition}
As one can see in \eqref{eq:bs-one} or \eqref{eq:bs-basket}, the space domain
of the underlying asset of an option is an unbounded set.
To apply a numerical scheme, one usually truncates the infinite domain into a finite one,
and then imposes a suitable boundary condition on the boundary.
Let $L$ be a sufficiently large asset price.
One then has the following version of the Black-Scholes equation  truncated at $x=L$.
\begin{eqnarray}
 \frac{\rd u}{\rd t}-\frac12{\si^2x^2}\frac{\rd^2u}{\rd x^2}-rx\frac{\rd u}{\rd x} + ru &=& 0, \quad (x,t)\in (0,L)\times(0,T], \label{eq:bs-one-truna} \\
    u(x,t)&=&g(x,t), \quad (x,t)\in \rd (0,L) \times(0,T], \label{eq:bs-one-trunb} \\
    u(x,0)&=&u_0, \quad x\in[0,L]. \label{eq:bs-one-trunc}
\end{eqnarray}
In many cases, the boundary condition on the artificial boundary $x=L$ is imposed by
extending a given payoff function. For example, European put
options assume $u(L,t)=0$ and European call options assume $\frac{\rd u}{\rd x}(L,t)=1$.
In \cite{trun-bd}, the errors caused by Dirichlet boundary
conditions on the artificial boundary are estimated and thus one can determine a
suitable truncation asset price for the artificial boundary
to meet a given error tolerance.

Instead of such artificial boundary conditions,
a transparent boundary condition is introduced in \cite{comp-option}
with which one can evaluate the solution in the truncated domain without any
truncation error.
However, the boundary condition in \cite{comp-option} is an
integro-differential one, which needs some suitable numerical schemes to
approximate it that will produce other possibly significant errors.
We will analyze the transparent boundary condition in more detail and then depart from
such an integro-differential type, by implementing the boundary condition in
the Laplace transformed setting instead of the usual space-time setting. 
Our transparent boundary condition is motivated by the following proposition.
\begin{proposition}\label{prop:trans-bc}
Assume that the coefficients $\si$ and $r$ in \eqref{eq:bs-one-lap} are
constants
and that $L>0$ is sufficiently large so that $\supp(u_0) \subset [0, L)$. Then
among the solutions $\hu(x,z)$ satisfying \eqref{eq:bs-one-lap}
there is a component, $\hu_+$, satisfying the following:
\begin{align}\label{eq:exterior-eq}
  \frac{\p\hu_+}{\p x}(x,z)=\frac{1}{x\si^2}\left\{ -(r-\frac12\si^2) -
\sqrt[+]{\big(r-\frac12\si^2\big)^2+2\si^2(r+z) }
  \right\}\hu_+(x,z)
\end{align}
$ \quad\forall x \in (L,\infty),$ where $\Re\{\sqrt[+]{z}\}>0$ for nonzero $z\in\mathbb C$.
\end{proposition}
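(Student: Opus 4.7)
The plan is to exploit the fact that for $x>L$ the Laplace-transformed Black--Scholes equation \eqref{eq:bs-one-lap} has vanishing right-hand side (since $\supp(u_0)\subset[0,L)$) and constant coefficients, so that it collapses to a homogeneous second-order Cauchy--Euler ODE
$$-\tfrac12\sigma^2 x^2\hat u''-rx\hat u'+(r+z)\hat u=0,\qquad x\in(L,\infty).$$
First I would try the ansatz $\hat u(x,z)=x^{\lambda(z)}$. Every term of the ODE then carries the common factor $x^\lambda$, and dividing through reduces the problem to the indicial equation $\tfrac12\sigma^2\lambda^2+(r-\tfrac12\sigma^2)\lambda-(r+z)=0$.

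Next I would solve this quadratic by the usual formula to obtain the two characteristic exponents
$$\lambda_\pm(z)=\sigma^{-2}\Bigl\{-(r-\tfrac12\sigma^2)\pm\sqrt[+]{(r-\tfrac12\sigma^2)^2+2\sigma^2(r+z)}\Bigr\},$$
where the branch $\sqrt[+]{\,\cdot\,}$ is unambiguous since the radicand lies in the right half-plane for $z$ on the chosen contour $\Gamma$ once the contour has been pushed far enough to the right. Away from the (discrete) set where the discriminant vanishes, the pair $\{x^{\lambda_+},x^{\lambda_-}\}$ spans the two-dimensional kernel of the exterior ODE, so any solution of \eqref{eq:bs-one-lap} on $(L,\infty)$ decomposes as $\hat u(x,z)=C_+(z)x^{\lambda_+(z)}+C_-(z)x^{\lambda_-(z)}$.

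Finally I would set $\hat u_+(x,z):=C_-(z)x^{\lambda_-(z)}$, that is, the summand built from the minus-sign root. This is the unique mode with $\Re\lambda_-(z)<0$, hence the only part of $\hat u$ that decays as $x\to\infty$, which is why it deserves to be tagged as the outgoing (transparent) component. A direct differentiation gives $\partial_x\hat u_+=\lambda_-(z)\,x^{-1}\,\hat u_+$, and substituting the explicit expression for $\lambda_-(z)$ reproduces exactly \eqref{eq:exterior-eq}.

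The only genuinely delicate point, and essentially the sole piece of real bookkeeping, is pinning down the branch convention on $\sqrt[+]{\,\cdot\,}$ so that the minus sign inside the root really does tag the decaying mode consistently for all $z\in\Gamma$; this must be checked using the sector constraints already imposed in \secref{sec:laplace} and \lemref{le:kappa}. Once that verification is in hand, the remainder of the argument is a short Cauchy--Euler computation with no further analytic content.
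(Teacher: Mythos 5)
Your proposal is correct and follows essentially the same route as the paper: reduce the exterior problem to a homogeneous second-order equation with constant coefficients, solve the resulting quadratic for the two characteristic exponents, select the root with the minus sign (the mode decaying as $x\to\infty$), and differentiate to obtain \eqref{eq:exterior-eq}. The only cosmetic difference is that the paper first substitutes $y=\log x$ to get a constant-coefficient ODE in $y$, whereas you apply the Cauchy--Euler ansatz $x^{\lambda}$ directly; the indicial equation and the exponents $\lambda_{\pm}$ are identical.
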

\begin{proof}
Take the change of variables, $y=\log x$, to \eqref{eq:bs-one-lap}.
Denoting by $\hv$ its solution, owing to $\supp(u_0) \in [0,L)$, 
one observes that $\hv$ satisfies the right exterior problem
\be
  z\hv
  -\frac12 \si^2\frac{\rd^2 \hv}{\rd y^2}
  - (r-\frac12\sigma^2)\frac{\rd \hv}{\rd y} + r\hv=0, \quad (y,z)\in  (L,\infty)\times\Gamma.
%
\ee
Among the two linearly independent solutions, we take the component which
vanishes at infinity, which is given as follows:
$$
\hv_+(y,z) = \exp\Big( \left\{\frac{-(r-\si^2/2)}{\si^2} - \frac{1}{\si^2}\sqrt[+]{\big(r-\frac12\si^2\big)^2+2\si^2(r+z) }\right\}y \Big).
$$
Restoring the change of variable, $x=e^y$, and denoting by $\hu_+(x)=\hv_+(y)$, one gets
$$
\hu_+(x,z) = x^{\left\{\frac{-(r-\si^2/2)}{\si^2} - \frac{1}{\si^2}\sqrt[+]{\big(r-\frac12\si^2\big)^2+2\si^2(r+z) }\right\}}.
$$
Thus, by differentiating with respect to $x$, one arrives at
$$
\frac{\p\hu_+}{\p x}(x,z) =\frac1{x\sigma^2} \left\{ -(r-\si^2/2) - \sqrt[+]{\big(r-\frac12\si^2\big)^2+2\si^2(r+z) }\right\}\hu_+(x,z).
$$
Thus, $\hu_+$ satisfies the equation \eqref{eq:exterior-eq},
which completes the proof.
\end{proof}
Due to \propref{prop:trans-bc}, by choosing $L>0$ sufficiently large so that
$\supp(u_0)\in [0,L),$ we propose the following transparent boundary
condition
at $x=L:$
\begin{align}\label{eq:trans-bc}
  \frac{\p\hu}{\p x}(L,z)=\frac{1}{L\sigma^2}\left\{-(r-\frac12\si^2)-
\sqrt[+]{\big(r-\frac12\si^2\big)^2+2\si^2(r+z) }
  \right\}\hu(L,z) \quad\forall z \in \G.
\end{align}

\begin{remark}
By the Laplace inversion of \eqref{eq:trans-bc},  the transparent
boundary condition  in the space-time domain is given by
\begin{align}\label{eq:trans-bc-time}
\frac{\p u}{\p x}(L,t)=\frac{1}{L\sigma^2}\left\{ -(r-\frac{\si^2}{2})u(L,t)
- \frac{\sqrt{2}\si}{\sqrt{\pi}}e^{-\eta t} \frac{\rd}{\rd t}\int^t_0\frac{u(L,\tau)e^{\eta\tau}}{\sqrt{t-\tau}}d\tau \right\},
\end{align}
where $\eta=\frac{(r-\si^2/2)^2}{2\si^2}+r$. In the derivation of 
\eqref{eq:trans-bc-time}, the following equalities are used:
\bes
  \mathcal{L} \left\{ \frac{\rd}{\rd t}\int^t_0 \frac{1}{ \sqrt{t-\tau} } u(\tau)e^{\eta\tau} d\tau \right\} &=&z\mathcal{L} \left\{ \int^t_0 \frac{1}{ \sqrt{t-\tau} } u(\tau)e^{\eta\tau} d\tau \right\} \nn \\
&=& z\mathcal{L} \left\{ \frac{1}{\sqrt{t}} \right\}\mathcal{L} \left\{ u(t)e^{\eta t} \right\} \nn\\
&=& \sqrt{\pi}\sqrt{z}\hu(z-\eta). \nn
\ees
In solving the partial integro-differential equation
\eqref{eq:exterior-eq} with \eqref{eq:trans-bc-time} using a Crank-Nicolson
type of time-marching algorithm, one usually needs an
expensive algorithm in computing time and memory. We will compare our
Laplace transformation method with the Crank-Nicolson method in \secref{sec:num},
and conclude superiority in using our method.
\end{remark}

\section{Numerical results}\label{sec:num}
We applied the Laplace transformation method for time discretization while the
standard piecewise linear ($P_1$) finite element method for the space
discretization is used.
Using an analytic solution for the first two examples, we can compare the
convergence rate of the proposed scheme. 
In \exref{ex:p-2} we examine the effects of
the Dirichlet boundary condition and the transparent boundary condition 
\eqref{eq:trans-bc} in the calculation of option prices.

In calculating the numerical values of the analytical solution, the error
function $erf(x)$ is evaluated by using the algorithm on page 213 of Numerical
Recipes in Fortran \cite{recipe-fortran} which has 16-digit precision.
The reduction rate and speedup are defined by
$$
\textrm{reduction rate}=\log_2\frac{\|u_{\Delta x}-u_{\textrm{exact}}\|_{L^2(0,L)}}{\|u_{\frac{\Delta x}{2}}-u_{\textrm{exact}}\|_{L^2(0,L)}},
$$
where $u_{\Delta x}$ denotes the numerical solution with the spatial mesh size
$\Delta x$,
and
$$
\textrm{speed up}=\frac{\textrm{time consumption}}{\textrm{time consumption using 1-CPU}} .
$$
\begin{example}[European put option with constant coefficients]\label{ex:p-1}
We consider an European put option with coefficients
$r=0.05$, $\sigma=0.3$, $T=1.0$ and $K=50$ and we truncate the domain at $L=200$.
\end{example}
For the numerical solutions, the boundary condition at $x=0$ in \eqref{eq:bs-one-trunb} is given by
\[
u(x,t) = K e^{-rt}, \quad (x,t)\in \{0\}\times (0,T],
\]
while that at $x=L$
\begin{equation}\label{eq:p1-diri-bc}
u(x,t) = 0, \quad (x,t)\in \{L\}\times (0,T].
\end{equation}
Although an analytic solution to this example is given by Black and Scholes,
it is our aim to compare convergence rates for the proposed scheme and
the standard time-marching algorithms such as 
Crank-Nicolson scheme. 
\tabref{tab:p-1-cn}
shows convergence rate for the Crank-Nicolson scheme. 
As can be expected, it gives first-order convergence rate.
\tabref{tab:p-1-lap} shows that the choice of
15 $z$-points in the contour with the proposed method is enough to obtain the same level of tolerance
attained using 640 time steps with the Crank-Nicolson method.
Observe that
for each $z$-point the cost of solving the complex-valued elliptic
problem using the proposed method is almost comparable to that of advancing
one step forward by solving the real-valued elliptic problem with the time-marching
algorithms.

In the proposed scheme, we need the value of $\kappa$ as in \lemref{le:kappa} to determine the location
of a integration contour. Since the coefficients are constants, if we choose
the asymptotic slope of the contour as $0.4$, 
we have $\kappa = 0.01811$ by \eqref{eq:kappa-con}, and therefore the contour
has to cut the real line at a point greater than 0.01811.
Under this constraint, we
choose the optimal parameters which are suggested in \cite{contour-para},
and these parameters are attached in \tabref{tab:p-1-con} in the case that
the evaluation time is 1.0 for different iteration numbers.
In particular, \tabref{tab:p-1-con} says that 12 iterations are enough to
balance with the space discretization of 2560 spatial meshes.
%


\begin{table}[htbp]
\centering
\begin{tabular}{|c|x{2.4cm}|c| c|c|}\hline
        Time steps&Number of space meshes & Mesh size & Error in $L^2$ & Reduction rate \\\hline
           10 & 10 &20  & 2.928 & \\\hline
      20 & 20 &10  & 0.7536 & 1.958 \\\hline
      40 & 40  &5 & 0.1878 & 2.004 \\\hline
      80 & 80 &2.5 & 0.4695E-01 & 2.000 \\\hline
      160 & 160  &1.25 & 0.1174E-01 & 2.000 \\\hline
      320 & 320  &5/8 & 0.2934E-02 & 2.000 \\\hline
      640 & 640 &5/16 & 0.7337E-03 & 2.000 \\\hline
\end{tabular}
\vspace{.3cm}
\caption{\exref{ex:p-1} with the Crank-Nicolson method}
\label{tab:p-1-cn}
\end{table}

\begin{table}[htbp]
\centering
\begin{tabular}{|c|x{2.4cm}|c|c|c|}\hline
        Number of $z$ &Number of space meshes & Mesh size & Error in $L^2$ & Reduction rate \\\hline
          15 & 10 &20  & 2.924 &\\\hline
      15 & 20 &10 & 0.7524 & 1.959\\\hline
      15 & 40  &5 & 0.1876 & 2.004\\\hline
      15 & 80  &2.5& 0.4688E-01 & 2.000\\\hline
      15 & 160  &1.25 & 0.1172E-01 & 2.000\\\hline
      15 & 320  &5/8 & 0.2930E-02 & 2.000\\\hline
      15 & 640 &5/16 & 0.7327E-03 & 2.000\\\hline
\end{tabular}
\vspace{.3cm}
\caption{\exref{ex:p-1} with the Laplace transformation method}
\label{tab:p-1-lap}
\end{table}

\begin{table}[htbp]
\centering
\begin{tabular}{|x{1.4cm}|x{1.6cm}|c|x{1.6cm}|c|c|c|c|}\hline
Number of $z$ &Number of space meshes &$L^2$-Error & Reduction rate & $\gamma$ & $\nu$ & $s$ & $\tau$ \\\hline
3 & 2560 & 0.6397E-00 &       & 13.48 & 12.42 & 0.4213 & 0.16500\\\hline
6 & 2560 & 0.1705E-01 & 5.229 & 26.95 & 24.84 & 0.4213 & 0.09385\\\hline
9 & 2560 & 0.3434E-03 & 5.634 & 40.43 & 37.26 & 0.4213 & 0.06809\\\hline
12& 2560 & 0.5642E-04 & 2.605 & 53.90 & 49.68 & 0.4213 & 0.05430\\\hline
15& 2560 & 0.4731E-04 & 0.003 & 67.38 & 62.09 & 0.4213 & 0.04556\\\hline
18& 2560 & 0.4721E-04 & 0.001 & 80.86 & 74.51 & 0.4213 & 0.03947\\\hline
21& 2560 & 0.4717E-04 & 0.000 & 94.33 & 86.93 & 0.4213 & 0.03494\\\hline
\end{tabular}
\vspace{.3cm}
\caption{Contour Parameters for \exref{ex:p-1}}
\label{tab:p-1-con}
\end{table}

%

\begin{example}[European put option with transparent boundary condition]\label{ex:p-2}
We consider a European put option with coefficients
$r=0.05$, $\sigma=0.3$, $T=1.0$ and $K=50$ and we truncate the domain at $L=50$.
\end{example}
In this example, we truncate the domain at the strike price, and then we replace
the Dirichlet boundary condition \eqref{eq:p1-diri-bc} with 
the transparent boundary condition given in \eqref{eq:trans-bc}. An identical
contour as in the previous example has been adopted.
\tabref{tab:p-2-diri} shows that the Dirichlet boundary condition with the domain truncation 
makes a significant error,
which cannot be overcome by mesh refinement.
\tabref{tab:p-2-trans}, however, gives second order convergence which is shown in
\tabref{tab:p-1-lap} although its domain is much smaller
 than that for \exref{ex:p-1}. Indeed, comparing the same mesh sizes in
\tabref{tab:p-2-trans} 
and \tabref{tab:p-1-lap}, one can observe the numerical values are almost identical.
In \figref{fig:p-2}
we can see the difference between the transparent boundary condition and the
Dirichlet boundary. 
\begin{table}[htbp]
\centering
\begin{tabular}{|c|x{2.4cm}|c|c|c|}\hline
        Number of $z$ & Number of space meshes & Mesh size & Error in $L^2$ & Reduction rate \\\hline
                15&  10&  5&  10.35& \\\hline
                15&  20&  2.5& 10.40& -0.007 \\\hline
                15&  40&  1.25& 10.41& -0.002 \\\hline
                15&  80&  5/8 & 10.42&  0.000 \\\hline
                15& 160&  5/16 & 10.42&  0.000 \\\hline
                15& 320&  5/32 & 10.42&  0.000 \\\hline
                15& 640&  5/64 & 10.42&  0.000 \\\hline
\end{tabular}
\vspace{.3cm}
\caption{\exref{ex:p-2} with the Dirichlet boundary condition at $L=50$}
\label{tab:p-2-diri}
\end{table}

\begin{table}[htbp]
\centering
\begin{tabular}{|c|x{2.4cm}|c|c|c|}\hline
        Number of $z$ & Number of space meshes & Mesh size & Error in $L^2$ & Reduction rate \\\hline
                15&  10& 5&  0.1870& \\\hline
                15&  20& 2.5&  0.4656E-01&  2.006\\\hline
                15&  40& 1.25&  0.1163E-01&  2.001\\\hline
                15&  80& 5/8 & 0.2907E-02&  2.000\\\hline
                15& 160& 5/16 & 0.7267E-03&  2.000\\\hline
                15& 320& 5/32 & 0.1817E-03&  1.999\\\hline
                15& 640& 5/64 & 0.4551E-04&  1.998\\\hline
\end{tabular}
\vspace{.3cm}
\caption{\exref{ex:p-2} with the transparent boundary condition
  \eqref{eq:trans-bc}  at $L=50$}
\label{tab:p-2-trans}
\end{table}

\begin{figure}
\begin{center}
  \resizebox{100mm}{!}{\includegraphics{./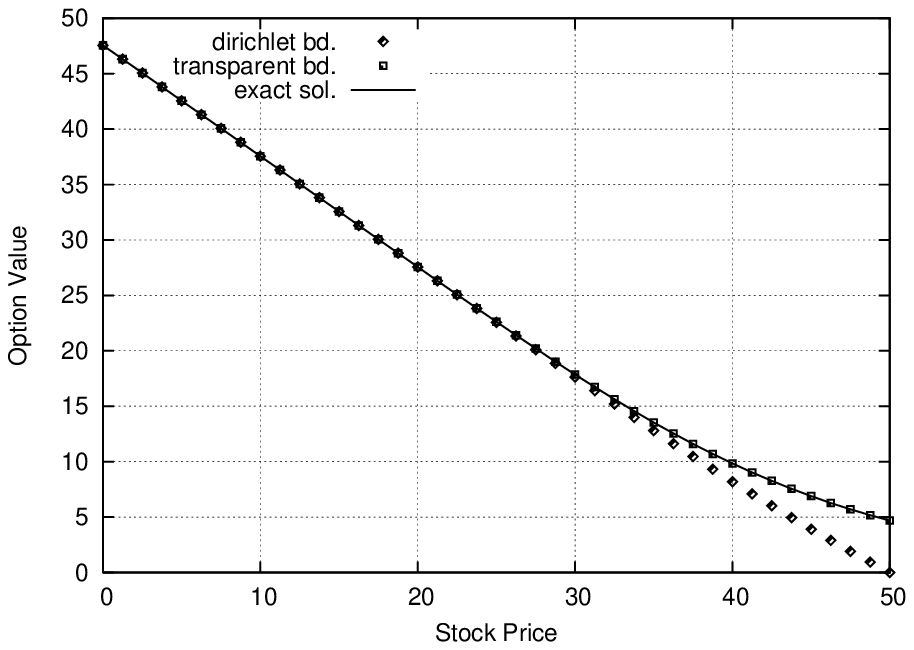}}
  \caption{Comparison between the Dirichlet boundary condition and the
    transparent boundary condition
\eqref{eq:trans-bc}
 in \exref{ex:p-2} }
  \label{fig:p-2}
\end{center}
\end{figure}


\begin{example}[Basket option with two underlying assets]\label{ex:p-3}
We consider a European put basket option with two underlying assets having coefficients
$r=0.05$, $a_{11}=0.09$, $a_{22}=0.09$, $a_{12}=a_{21}=-0.018$, time to maturity=1.0,
artificial boundary $L_1=300, L_2=300$ and payoff function $(100-\max(x_1,x_2))_+$ is given.
\end{example}
For numerical computation, the boundary conditions are given by 
\begin{eqnarray}
\frac{\rd u}{\rd \nu}(\mathbf{x},t)=0, 
\quad\textrm{for } (\mathbf{x},t)\in \big(\{0\}\times(0,L_2)\cup(0,L_1)\times\{0\}\big)\times(0,T], \nn\\
u(\mathbf{x},t)=0, 
\quad\textrm{for } (\mathbf{x},t)\in \big(\{L_1\}\times(0,L_2)\cup(0,L_1)\times\{L_2\}\big)\times(0,T].\label{eq:basket-diri}
\end{eqnarray}
To evaluate the convergence rates for the proposed scheme, we solve the same
problem using the Crank-Nicolson scheme on a $512\times512$ space grid for the extended 
artificial domain $L_1=L_2=600$ with $\Delta t=0.02$.
We set this as the reference solution and calculate 
the relative $L^2$ error for the proposed scheme.
The integration contour is built using the parameters
$\gamma=35.94, \nu=33.12, s=0.4213, \tau=0.07472$.
Numerical results in \tabref{tab:bs-basket-con} show an almost second-order convergence rate.

\begin{table}[htbp]
\centering
    \begin{tabular}{|c|x{2.4cm}|c|x{2.5cm}|c|}\hline
        Number of $z$ & Number of space meshes & Mesh size & Relative error in $L^2$ & Reduction rate \\\hline
                15&  $16\times16$& 75/4 &   0.3662E-01& \\\hline
                15&  $32\times32$& 75/8 &    0.1047E-01&  1.806\\\hline
                15&  $64\times64$& 75/16 &    0.2969E-02&  1.819\\\hline
                15&  $128\times128$& 75/32 &  0.8444E-03&  1.814\\\hline
      \end{tabular}
\vspace{.3cm}
\caption{Convergence rate in \exref{ex:p-3} on the domain $[0,300]\times[0,300]$ }
\label{tab:bs-basket-con}
\end{table}

To shorten the artificial boundary, we apply the  
transparent boundary condition by assuming that the 
tangential derivative is negligible on the boundary.
Then the boundary condition \eqref{eq:basket-diri} is replaced with
$$
  \frac{\p\hu}{\p x_1}(L_1,x_2,z)=\frac{1}{L_1a_{11}}\left\{-(r-\frac12a_{11})-
\sqrt[+]{\big(r-\frac12a_{11}\big)^2+2a_{11}(r+z) }
  \right\}\hu(L_1,x_2,z) \nn
$$
on $(x_1,x_2,z)\in\{L_1\}\times(0,L_2)\times\G$, and
$$
  \frac{\p\hu}{\p x_2}(x_1,L_2,z)=\frac{1}{L_2a_{22}}\left\{-(r-\frac12a_{22})-
\sqrt[+]{\big(r-\frac12a_{22}\big)^2+2a_{22}(r+z) }
  \right\}\hu(x,L_2,z) \nn
$$
on $(x_1,x_2,z)\in(0,L_1)\times\{L_2\}\times\G$. 
In \tabref{tab:bs-basket-bd} we compare the results produced by
the different boundary conditions on the lines $L_1=150$ and $L_2=150$. 
As can be seen in \tabref{tab:bs-basket-bd}, the transparent boundary 
condition is more accurate than the Dirichlet boundary condition.
Furthermore, \tabref{tab:bs-basket-con} and \tabref{tab:bs-basket-bd} show
that if we apply the transparent boundary condition, it gives competitive  
error level in comparison to the Dirichlet boundary condition
even though its computational domain is a quarter size of that with
the Dirichlet boundary conditions applied.

\begin{table}[htbp]
\centering
    \begin{tabular}{|x{1cm}|x{2.4cm}|c|x{3.cm}|p{3.cm}|}\hline
Number of $z$ & Number of space meshes & Mesh size & Relative error in $L^2$(Dirichlet)& Relative error in $L^2$(Transparent) \\\hline
                15&  $16\times 16$ & 75/8 &   0.1998E-01 & 0.1076E-01 \\\hline
                15&  $32\times 32$ & 75/16 &    0.1176E-01 & 0.3485E-02 \\\hline
                15&  $64\times 64$ &  75/32 &   0.9283E-02 & 0.1724E-02 \\\hline
      \end{tabular}
\vspace{.3cm}
\caption{Effect of boundary conditions in \exref{ex:p-3} on the domain $[0,150]\times[0,150]$}
\label{tab:bs-basket-bd}
\end{table}

Since the elliptic equations in \eqref{eq:bs-one-lap} for $z = z_k, k =0, 1,
2,\cdots, N,$
are independent each other, no communication is required during the
computation except for the last summation step in the numerical Laplace
inversion. Thus the Laplace transformation method is very well fitted for
parallel computation.
The result in \tabref{tab:bs-basket-s} is generated on $128\times128$ space
grid for $L_1=L_2=300$ with a 15-number of $z$ points using IBM PowerPC97 with 2.2GHz
clock speed.
This table, as can be expected, shows almost ideal speedup 
because of the minimization of communication time. Finally, we attach the
plot of the basket option price at \figref{fig:bs-basket}.
\begin{table}[htbp]
\centering
    \begin{tabular}{|c|c|c|c|c|}\hline
        Number of CPUs & 1 & 3 & 5 & 15  \\\hline
                Time(sec) & 74.93 & 25.25 & 15.31 & 5.671  \\ \hline
                Speedup   & 1.00  & 2.97  & 4.89  & 13.2   \\ \hline
      \end{tabular}
\vspace{.3cm}
\caption{Parallelization speedup in \exref{ex:p-3}}
\label{tab:bs-basket-s}
\end{table}

\begin{figure}
\begin{center}
  \resizebox{110mm}{!}{\includegraphics{./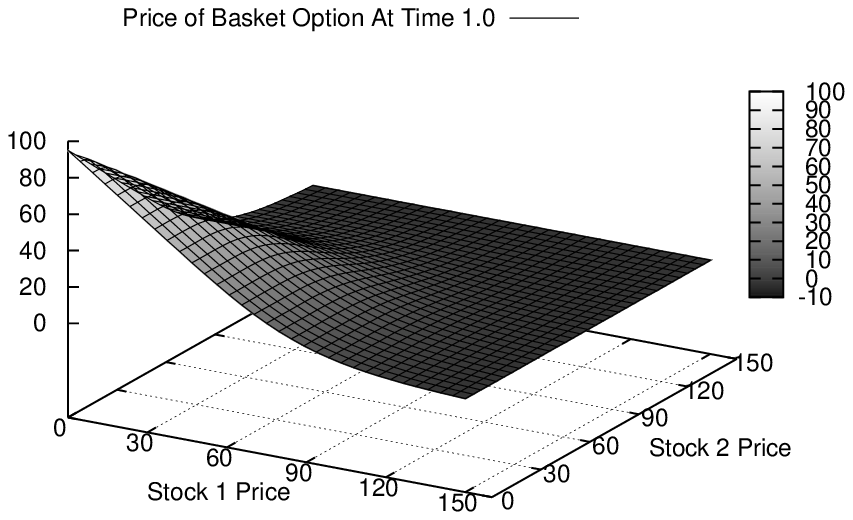}}
  \caption{Basket option price of \exref{ex:p-3}}
  \label{fig:bs-basket}
\end{center}
\end{figure}

\end{document}